\documentclass[journal,12pt,onecolumn]{IEEEtran}
\IEEEoverridecommandlockouts
\usepackage{cite}
\usepackage{amsmath,amssymb,amsfonts,amsthm}
\usepackage{txfonts}
\usepackage{algorithmic}
\usepackage{graphicx}
\usepackage{caption}
\usepackage{textcomp}
\usepackage{xcolor}
\usepackage{tikz}
\usetikzlibrary{positioning, shapes.geometric}
\usepackage{setspace} 
\usepackage{comment}
\usepackage{url}
\usepackage{blkarray}
\newif\ifcomment
\commenttrue

\begin{document}

\newtheorem{theorem}{Theorem}
\newtheorem{definition}{Definition}
\newtheorem{corollary}{Corollary}
\newtheorem{lemma}{Lemma}
\newtheorem*{lemma*}{Lemma}
\newtheorem{example}{Example}
\newtheorem{remark}{Remark}

\title{Non-Existence of Some Function-Correcting Codes With Data Protection}

\author{\IEEEauthorblockN{ Charul Rajput\IEEEauthorrefmark{1}, B. Sundar Rajan\IEEEauthorrefmark{2}, Ragnar Freij-Hollanti\IEEEauthorrefmark{1}, Camilla Hollanti\IEEEauthorrefmark{1}}

\IEEEauthorblockA{\IEEEauthorrefmark{1}Department of Mathematics and Systems Analysis, Aalto University, Finland
    \\\{charul.rajput, ragnar.freij, camilla.hollanti\}@aalto.fi}
    \IEEEauthorblockA{\IEEEauthorrefmark{2}Department of Electrical Communication Engineering, Indian Institute of Science, Bengaluru, India
    \\\ bsrajan@iisc.ac.in}}

\maketitle

\begin{abstract}
In this paper, we consider the recently introduced concept of \emph{function-correcting codes (FCCs) with data protection}, which provide a certain level of error protection for the data and a higher level of protection for a desired function on the data. These codes are denoted by $(f\!:\!d_d,d_f)$-FCC, where $d_d$ is the minimum distance of the code and $d_f$ denotes the minimum distance between those codewords that correspond to different function values of a function $f:\mathbb{F}_q^k \to \mathrm{Im}(f)$, with $d_f \geq d_d$. We use a distance graph on a code based on the pairwise distances of its codewords, and show conditions under which a code cannot work as a \emph{strict} $(f\!:\!d_d,d_f)$-FCC, that is, code for which $d_f > d_d$. We then consider some well-known classes of codes, such as perfect codes and maximum distance separable (MDS) codes, and show that they cannot be used as \emph{strict} $(f\!:\!d_d,d_f)$-FCCs.
\end{abstract}
\begin{IEEEkeywords}
Error-correction, function-correcting codes, MDS codes, perfect codes, redundancy
\end{IEEEkeywords}
  \section{Introduction}
\label{intro}

Function-correcting codes (FCCs), introduced in \cite{LBWY2023}, are designed to protect specific functions or attributes of a message vector rather than the entire vector itself. In the standard communication setup, a transmitter sends a message vector over a noisy channel to a receiver who is interested only in the value of a function of the message. Unlike classical error-correcting codes (ECCs), which guarantee reliable recovery of the full message, FCCs impose distance constraints only between codewords corresponding to distinct function values. This targeted form of protection allows FCCs to achieve reliability for the function of interest with potentially lower redundancy, especially when the function has a significantly smaller image than its domain.

Since the introduction of FCCs, several works have studied their fundamental limits and constructions for various channel models and classes of functions. In \cite{LBWY2023}, the authors establish the basic framework of function-correcting codes by formally defining the FCC model and deriving upper and lower bounds on the redundancy required to protect a function. They also present explicit constructions for certain classes of functions, such as locally binary functions and the Hamming weight function, and show that some of these constructions are optimal. This work primarily considers communication over binary symmetric channels. Subsequent works extend the concept of FCCs to other channel models and settings. In particular, FCCs for symbol-pair read channels are studied in \cite{XLC2024} for the binary case and later generalized to $b$-symbol read channels in \cite{SSY2025}. In \cite{PR2025}, the authors focus on linear functions, where the kernel of the function induces a natural partition of the domain, and use this structure to construct FCCs and derive redundancy bounds. Further works derive improved bounds and constructions of FCCs for specific families of functions, including the Hamming weight function and locally bounded functions \cite{GXZZ2025,RRHH2025ITW25,VSS2025}. Redundancy bounds that are near-optimal up to logarithmic factors were derived in \cite{LS2025}, where tightness was also shown for sufficiently large field sizes. FCCs have also been investigated under alternative distance measures. In particular, Plotkin-type bounds and corresponding code constructions in the Lee metric were presented in \cite{VS2025,HUR2025}, while FCCs with homogeneous distance were studied in \cite{LL2025}. In addition, Plotkin-type bounds for FCCs over $b$-symbol read channels were obtained in \cite{SR2025}.

Across this line of work, FCCs are defined so as to protect only the function values, in the sense that distance constraints are imposed only between codewords corresponding to different function outputs. As a result, these codes do not provide any explicit guarantee on the level of error protection for the underlying data symbols themselves.

In many practical scenarios, however, it is desirable to simultaneously protect both the data and a function of the data, with different levels of protection, since certain attributes or functions of the data may be more important than the data itself. Motivated by this, the notion of \emph{function-correcting codes with data protection} was recently introduced in \cite{RRHH2025}. In this framework, an encoding is required to satisfy two distance constraints: a minimum distance $d_d$ that ensures error protection for the data itself, and a (possibly larger) minimum distance $d_f$ between codewords corresponding to distinct function values. When $d_f > d_d$, the code provides strictly stronger protection for the function than for the data, and such codes are referred to as \emph{strict} $(f\!:\!d_d,d_f)$-FCCs.

While constructions and redundancy bounds for FCCs with data protection have been explored in \cite{RRHH2025}, in this work we focus on the existence of strict FCCs for specific parameters. In particular, it is natural to ask whether well-known optimal families of classical error-correcting codes can be used to provide additional protection for function values beyond what they already offer for data. Addressing this question is the main focus of this paper.

In this work, we develop a graph-theoretic framework to study the feasibility of strict $(f\!:\!d_d,d_f)$-FCCs. To this end, we associate to any code a graph whose vertices are the codewords and whose edges are determined by pairwise Hamming distances. We show that the connectivity properties of these graphs impose strong restrictions on the achievable function distance $d_f$. In particular, if the distance graph of a code is sufficiently connected, then the code cannot serve as a strict FCC for any nontrivial function.

Using this framework, we derive general non-existence results for strict FCCs based on classical parameters of codes, such as the covering radius. As concrete consequences, we prove that two fundamental families of optimal error-correcting codes, perfect codes and MDS codes, cannot be used as strict $(f\!:\!d_d,d_f)$-FCCs for any nontrivial function. That is, these codes cannot provide stronger protection for function values than for the data itself. We further extend these results to other classes of codes for which the covering radius is known or can be bounded. 
This work is an extension of the full version \cite{RRHH2025}, with some added content, such as the $\alpha$-distance graph, which is a generalization of the minimum-distance graph given in the full version, and some new results, such as Lemma~\ref{adG3}.

\subsection{Contribution and paper organization}

The main contributions of this paper are summarized as follows.
\begin{itemize}
    \item We study the existence of \emph{strict} $(f\!:\!d_d,d_f)$-function-correcting codes, where the protection guaranteed for the function values is strictly stronger than that for the data.
    \item We introduce distance graphs of codes and show how the connectivity properties of these graphs impose fundamental limitations on the achievable function distance.
    \item Using this framework, we derive general non-existence results for strict $(f\!:\!d_d,d_f)$-FCCs in terms of classical code parameters, such as the covering radius.
    \item As concrete consequences, we prove that well-known optimal families of error-correcting codes, including perfect codes, MDS codes, and linear codes satisfying specific structural conditions, cannot serve as strict $(f\!:\!d,d_f)$-FCCs for any
nontrivial function.
    \item We further extend these non-existence results to other classes of codes for which the covering radius is known or can be bounded.
\end{itemize}

The remainder of the paper is organized as follows. Section~\ref{preliminaries} introduces basic definitions and concepts used throughout the paper. In Section~\ref{FCC_DP}, we recall the definition of function-correcting codes with data protection. Section~\ref{main} contains the main results of the paper, where we establish non-existence results for strict $(f\!:\!d_d,d_f)$-FCCs using distance-graph arguments. Finally, Section~\ref{conclusion} concludes the paper.

\noindent\textbf{Notation:}
The finite field of size $q$ is denoted by $\mathbb{F}_q$, and $\mathbb{F}_q^k$ denotes the $k$-dimensional vector space over $\mathbb{F}_q$. For vectors $u,v \in \mathbb{F}_q^n$, $d(u,v)$ denotes the Hamming distance between $u$ and $v$, and for a vector $u \in \mathbb{F}_q^n$, $\mathrm{wt}(u)$ denotes the Hamming weight of $u$. For integers $n$ and $r$, the binomial coefficient is denoted by
$
\binom{n}{r} \triangleq \frac{n!}{r!(n-r)!}.
$ 
Throughout the paper, vectors in $\mathbb{F}_q^k$ are sometimes written without commas for notational convenience, for example, $0110 \in \mathbb{F}_2^4$.

\section{Preliminaries}
\label{preliminaries}

This section presents some basic notions of codes \cite{L1992}, and the definition of function-correcting codes, based on the work in \cite{LBWY2023}.

A $q$-ary \emph{error-correcting code} (ECC) $\mathcal{C}$ of length $n$ is a subset of $\mathbb{F}_q^n$ with cardinality $M = |\mathcal{C}|$. The \emph{minimum distance} of $\mathcal{C}$, denoted by $d_{\min}(\mathcal{C})$ or simply $d$, is defined as
$$
d \triangleq \min_{\substack{c_1,c_2 \in \mathcal{C} \\ c_1 \neq c_2}} d(c_1,c_2).
$$
Such a code is commonly referred to as an $(n,M,d)$ code. 
For a vector $x \in \mathbb{F}_q^n$, the distance of $x$ from the code $\mathcal{C}$ is defined as
$$
d(x,\mathcal{C}) \triangleq \min_{c \in \mathcal{C}} d(x,c).
$$
More generally, for two nonempty subsets $A, B \subseteq \mathbb{F}_q^n$, the distance between the subsets $A$ and $B$ is defined as
$$
d(A, B) \triangleq \min_{a \in A, b \in B} d(a,b).
$$
\begin{definition}[Covering radius]
The \emph{covering radius} of a code $\mathcal{C}$, denoted by $R(\mathcal{C})$, is defined as
\[
R(\mathcal{C}) \triangleq \max_{x \in \mathbb{F}_q^n} d(x,\mathcal{C}).
\]
\end{definition}

Equivalently, $R(\mathcal{C})$ is the smallest integer such that the union of all Hamming balls of radius $R(\mathcal{C})$ centered at the codewords of $\mathcal{C}$ covers the entire space $\mathbb{F}_q^n$.
The covering radius measures the maximum distance from any vector in the ambient space to the nearest codeword.

\begin{definition}[Function-Correcting Codes]
Consider a function $f: \mathbb{F}_q^k \rightarrow Im(f)$. A systematic encoding $\mathcal{C}: \mathbb{F}_q^k \rightarrow \mathbb{F}_q^{k+r}$ is defined as an $(f, t)$-function correcting code (FCC) if, for any $u_1, u_2 \in \mathbb{F}_q^k$ such that $f(u_1) \neq f(u_2)$, the following condition holds: 
$$d(\mathcal{C}(u_1), \mathcal{C}(u_2)) \geq 2t+1.$$
\end{definition}

\section{Function-correcting codes with Data Protection}\label{FCC_DP}

In this section, we give the definition of function-correcting codes that simultaneously protect data and function values. For more details on such codes, we refer to \cite{RRHH2025}.

\begin{definition}
Consider a function $f: \mathbb{F}_q^k \rightarrow Im(f)$. An encoding $\mathfrak{C}_f: \mathbb{F}_q^k \rightarrow \mathbb{F}_q^{k+r}$ is defined as an $(f\!:d_d,d_f)$-function correcting code (FCC) if, 
\begin{itemize}
    \item for any $u_1, u_2 \in \mathbb{F}_q^k$, such that $u_1 \ne u_2$,
$$d(\mathfrak{C}_f(u_1), \mathfrak{C}_f(u_2)) \geq d_d,$$
\item 
for any $u_1, u_2 \in \mathbb{F}_q^k$ such that $f(u_1) \neq f(u_2)$, the following condition holds: 
$$d(\mathfrak{C}_f(u_1), \mathfrak{C}_f(u_2)) \geq d_f,$$
\end{itemize}
where $d_d$ and $d_f$ are two non-negative integers such that $d_d \le d_f$. 
\end{definition}

The following example illustrates one FCCs with data protection.

\begin{example}\label{ex:fcc_dd3_df5}
Consider the same parity function $f:\mathbb{F}_2^4\to\{0,1\}$,
\[
f(u)=\mathrm{wt}(u)\bmod 2 = u_1\oplus u_2\oplus u_3\oplus u_4,
\quad u=(u_1,u_2,u_3,u_4).
\]
We construct an $(f\!:\!d_d=3,d_f=5)$-FCC by using two step construction method given in \cite{RRHH2025}. By using a $[7,4,3]$ Hamming code
(with systematic encoding) in first step and a 2-bit  parity vectors that depend only on $f(u)$ in second step, we get the following code.

The parity bits of a $[7,4,3]$ Hamming code are
\[
p_1 = u_1\oplus u_2\oplus u_4,\
p_2 = u_1\oplus u_3\oplus u_4,\
p_3 = u_2\oplus u_3\oplus u_4,
\]
and the FCC parity bits are
\[
s_1=s_2=f(u)=u_1\oplus u_2\oplus u_3\oplus u_4.
\]
Now define the encoding $\mathfrak{C}_f:\mathbb{F}_2^4\to\mathbb{F}_2^9$ as
\[
\mathfrak{C}_f(u) = \big(u_1,u_2,u_3,u_4,\, p_1,p_2,p_3,\, s,s\big).
\]

\begin{center}
\begin{tabular}{ccccc}
\hline
$u\in\mathbb{F}_2^4$ 
& $f(u)$ 
& $(p_1p_2p_3)$ 
& $(s,s)$ 
& Codeword\\
\hline
0000 & 0 & 000 & 00 & 0000\,000\,00 \\
0001 & 1 & 111 & 11 & 0001\,111\,11 \\
0010 & 1 & 011 & 11 & 0010\,011\,11 \\
0011 & 0 & 100 & 00 & 0011\,100\,00 \\
0100 & 1 & 101 & 11 & 0100\,101\,11 \\
0101 & 0 & 010 & 00 & 0101\,010\,00 \\
0110 & 0 & 110 & 00 & 0110\,110\,00 \\
0111 & 1 & 001 & 11 & 0111\,001\,11 \\
1000 & 1 & 110 & 11 & 1000\,110\,11 \\
1001 & 0 & 001 & 00 & 1001\,001\,00 \\
1010 & 0 & 101 & 00 & 1010\,101\,00 \\
1011 & 1 & 010 & 11 & 1011\,010\,11 \\
1100 & 0 & 011 & 00 & 1100\,011\,00 \\
1101 & 1 & 100 & 11 & 1101\,100\,11 \\
1110 & 1 & 000 & 11 & 1110\,000\,11 \\
1111 & 0 & 111 & 00 & 1111\,111\,00 \\
\hline
\end{tabular}
\end{center}

As can be verified, for any $u\neq v$, we have
$d\big(\mathfrak{C}_f(u),\mathfrak{C}_f(v)\big)\ \ge\ 3.
$
Thus, $d_d=3$ holds. If $f(u)\neq f(v)$, then $s\neq s'$ and the last two parity bits differ in both positions, so
we have 
$d\big(\mathfrak{C}_f(u),\mathfrak{C}_f(v)\big)
\ge 3 + 2 = 5.$
Thus, $d_f=5$ holds. 
Therefore, $\mathfrak{C}_f$ is an $(f\!:\!3,5)$-FCC of length $9$ (redundancy $r=5$).
\end{example}

\begin{definition}[Optimal redundancy]
The \emph{optimal redundancy}, denoted by $r_f(k\!: \!d_d, d_f)$, is defined as the minimum value of $r$ for which there exists an $(f\!:d_d,d_f)$-FCC with an encoding function $\mathfrak{C}_f: \mathbb{F}_q^k \rightarrow \mathbb{F}_q^{k + r}$.
\end{definition}

\section{Non-existence of strict  $(f\!:\!d_d,d_f)$-FCC}\label{main}

We call an $(f\!:\!d_d,d_f)$-FCC a {\it strict  $(f\!:\!d_d,d_f)$-FCC} if $d_f > d_d$ since the case $d_f=d_d$ reduces to ECC.
In this section, we define the $\alpha$-distance graph for a code $\mathcal{C}$ and then using this we show that certain classes of codes cannot be used as strict $(f\!:\!d_d,d_f)$-FCCs. 

\begin{definition}[$\alpha$-distance graph]
Let $\mathcal{C}$ be a code. The \emph{$\alpha$-distance graph} of $\mathcal{C}$, denoted by $G_{\alpha}(\mathcal{C})$, is the graph whose vertex set is $\mathcal{C}$ and two distinct vertices $c_1, c_2 \in \mathcal{C}$ are adjacent if and only if 
$d(c_1, c_2) \leq \alpha,
$
where $\alpha\geq d_{\min}(\mathcal{C})$, and $d_{\min}(\mathcal{C})$ denotes the minimum distance of the code $\mathcal{C}$.
\end{definition}

In particular, the $\alpha$-distance graph is called the \emph{minimum-distance graph} if
$\alpha = d_{\min}(\mathcal{C})$, and is denoted by
$G(\mathcal{C}) = G_{d_{\min}}(\mathcal{C})$.
 Clearly, $G_{\alpha}(\mathcal{C})$ is a subgraph of
$G_{\alpha'}(\mathcal{C})$ whenever $\alpha \le \alpha'$. Consequently, if
$G_{\alpha}(\mathcal{C})$ is connected, then $G_{\alpha'}(\mathcal{C})$ is also connected. Moreover, if $d_{\max}$ denotes the maximum distance between any two codewords, then $G_{d_{\max}}(\mathcal{C})$ is a complete graph.

\begin{example}\label{ex:aDG}
Let $ \mathcal C=\{00000,\,00001,\,00010,\,01111,\,10111,$ $11111\}\subset \mathbb{F}_2^5.$
The $\alpha$-distance graph with $\alpha=2>d_{\min}(\mathcal C)$, i.e., $G_{2}(\mathcal{C})$ has edges only between pairs at distance less than or equal to 2. 

\begin{figure}[h]
\centering
\begin{tikzpicture}[scale=0.9,
  every node/.style={circle,draw,inner sep=2pt,minimum size=18pt}]
  \node (a) at (0,0)   {\small 00000};
  \node (b) at (-1.8,1.6) {\small 00001};
  \node (c) at (1.8,1.6)  {\small 00010};

  \node (d) at (5.2,0)   {\small 11111};
  \node (e) at (3.4,1.6) {\small 01111};
  \node (f) at (7,1.6) {\small 10111};

  \draw (a)--(b);
  \draw (a)--(c);
  \draw (b)--(c);

  \draw (d)--(e);
  \draw (d)--(f);
  \draw (e)--(f);

\end{tikzpicture}
\end{figure}
\end{example}

The following theorem illustrates how knowledge of the $\alpha$-distance graph can be used to determine whether a code can serve as a $(f\!:d_d,d_f)$-FCC with certain parameters.

\begin{theorem}\label{aDG1}
Let $\mathcal{C}$ be an $(n,q^k,d)$ code. If the $\alpha$-distance graph $G_{\alpha}(\mathcal{C})$ is a connected graph, then $\mathcal{C}$ cannot be an $(f\!:d,d_f)$-FCC for any $f:\mathbb{F}_q^k\to \text{Im}(f)$ with $|\text{Im}(f)| \geq 2$ and $d_f>\alpha$.
\end{theorem}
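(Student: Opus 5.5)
The argument is by contradiction, using only the definitions and connectivity. Suppose $\mathcal{C}$ is the image of an encoding $\mathfrak{C}_f:\mathbb{F}_q^k\to\mathbb{F}_q^n$ that is an $(f\!:d,d_f)$-FCC, with $|\mathrm{Im}(f)|\ge 2$ and $d_f>\alpha$. Since $|\mathcal{C}|=q^k$ and the data-distance condition (with $d\ge 1$) forces $\mathfrak{C}_f$ to be injective, $\mathfrak{C}_f$ is a bijection from $\mathbb{F}_q^k$ onto $\mathcal{C}$. So every codeword $c$ carries a well-defined label $g(c)=f(\mathfrak{C}_f^{-1}(c))$. Because $f$ takes at least two values, this labelling $g$ is non-constant on the vertex set of $G_\alpha(\mathcal{C})$.

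The key step is to show that $g$ is constant along edges. Take adjacent vertices $c_1,c_2$ of $G_\alpha(\mathcal{C})$, so $d(c_1,c_2)\le\alpha<d_f$. If $g(c_1)\ne g(c_2)$, the second FCC condition would give $d(c_1,c_2)\ge d_f>\alpha$, which is a contradiction. Hence $g(c_1)=g(c_2)$ for every edge.

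Now use connectivity. Pick $c,c'$ with $g(c)\ne g(c')$, which exist since $g$ is non-constant. Since $G_\alpha(\mathcal{C})$ is connected, there is a path $c=v_0,v_1,\dots,v_m=c'$. Applying the edge step along this path and inducting gives $g(v_0)=g(v_1)=\cdots=g(v_m)$, so $g(c)=g(c')$, a contradiction. Equivalently, the preimage under $g$ of one value of $f$ would be a nonempty proper vertex subset with no outgoing edges, which is impossible in a connected graph.

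No step here is a genuine obstacle. The only point needing care is the identification of codewords with messages, which requires $\mathfrak{C}_f$ to be injective (guaranteed because $d\ge 1$). Note that the hypothesis $\alpha\ge d$ is not even needed for this argument beyond making the graph definition meaningful.
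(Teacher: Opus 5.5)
Your proof is correct and follows essentially the same argument as the paper. The paper partitions the codewords into classes $\mathcal{C}_a$ by function value, notes that $d_f>\alpha$ leaves no edge of $G_\alpha(\mathcal{C})$ between distinct classes, and concludes that this contradicts connectivity; your edge-constant labelling propagated along a path is the same idea, with the injectivity of the encoding (which the paper leaves implicit) made explicit.
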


\begin{proof}
Assume $G_{\alpha}(\mathcal C)$ is connected and suppose that $\mathcal C$ serves as an $(f\!:d,d_f)$ with $d_f>\alpha$ for a function $f:\mathbb F_q^k\to \text{Im}(f)$ with $|\text{Im}(f)| \geq 2$. For each $a\in\mathrm{Im}(f)$, define the following set
$$
\mathcal C_a=\{c_u\in\mathcal C :\ f(u)=a\},
$$
where $c_u$ denotes the codeword corresponding to the message vector $u\in \mathbb{F}_q^k$. If $|\mathrm{Im}(f)|\ge2$ and $d_f>\alpha$, then by the $(f\!:d,d_f)$-FCC requirement, every $x\in\mathcal C_a$ and $y\in\mathcal C_b$ with $a\ne b$ must satisfy $d(x,y)\ge d_f>\alpha$. Hence, no edge of $G_{\alpha}(\mathcal C)$ exists between distinct $\mathcal C_a$ and $\mathcal C_b$, so the partition $\{\mathcal C_a : a \in \mathrm{Im}(f)\}$ separates the graph, contradicting the connectedness of $G_{\alpha}(\mathcal C)$. Therefore, $\mathcal{C}$ cannot be an $(f\!:d,d_f)$-FCC.
\end{proof}

Theorem \ref{aDG1} can easily be generalized for codes whose $\alpha$-distance graph has more than one connected component. 

\begin{theorem}\label{aDG2}
Let $\mathcal{C}$ be a $(n,q^k,d)$ code. If the $\alpha$-distance graph $G_{\alpha}(\mathcal{C})$ has $Q$ number of connected components, then $\mathcal{C}$ cannot be a $(f\!:d,d_f)$-FCC for any $f:\mathbb{F}_q^k\to \text{Im}(f)$ with $|\text{Im}(f)| \geq Q+1$ and $d_f>\alpha$. 
\end{theorem}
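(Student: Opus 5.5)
The plan is a pigeonhole argument that reuses the mechanism from the proof of Theorem~\ref{aDG1}. Suppose, for contradiction, that $\mathcal C$ is an $(f\!:d,d_f)$-FCC with $d_f>\alpha$ for some $f$ with $|\mathrm{Im}(f)|\ge Q+1$. As before, for each $a\in\mathrm{Im}(f)$ I will use the class $\mathcal C_a=\{c_u\in\mathcal C: f(u)=a\}$. Each $\mathcal C_a$ is nonempty because $a$ lies in the image of $f$. These classes partition $\mathcal C$, since the encoding is injective: $d_d=d\ge 1$ forces distinct messages to give distinct codewords.

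The key step is to show that every connected component of $G_\alpha(\mathcal C)$ lies inside a single class $\mathcal C_a$. Take an edge $\{x,y\}$ of $G_\alpha(\mathcal C)$, so $d(x,y)\le\alpha<d_f$. By the FCC requirement, if $x\in\mathcal C_a$ and $y\in\mathcal C_b$ with $a\neq b$, then $d(x,y)\ge d_f$, which is impossible. Hence adjacent codewords share the same function value. Walking along a path, the same holds for any two codewords joined by a path, that is, for any two codewords in the same connected component.

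It follows that $f$ takes only one value on each component, so the map sending a component to that value is well defined. This map is onto $\mathrm{Im}(f)$, because each nonempty $\mathcal C_a$ contains some codeword and hence meets some component. Therefore $|\mathrm{Im}(f)|\le Q$, which contradicts $|\mathrm{Im}(f)|\ge Q+1$. Equivalently, by pigeonhole, $Q+1$ nonempty classes cannot be spread over only $Q$ components without two classes meeting a common component, and that would require an edge between different classes.

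There is no real obstacle here. The only point needing care is the nonemptiness of each $\mathcal C_a$, which holds because $\mathrm{Im}(f)$ is by definition the actual image of $f$, together with the observation that the case $Q=1$ recovers Theorem~\ref{aDG1}.
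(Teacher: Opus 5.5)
Your proof is correct and follows the same approach as the paper. The paper gives no separate proof and only remarks that the argument of Theorem~\ref{aDG1} generalizes: since $d_f>\alpha$, there are no edges of $G_\alpha(\mathcal C)$ between distinct classes $\mathcal C_a$, so each component lies in a single class and the $|\mathrm{Im}(f)|\ge Q+1$ nonempty classes would need at least $Q+1$ components, which is exactly your pigeonhole argument.
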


Since the connectivity of $G_{\alpha}(\mathcal{C})$ is preserved as $\alpha$
increases, when applying Theorems~\ref{aDG1} it is sufficient to
identify the smallest value of $\alpha$ for which $G_{\alpha}(\mathcal{C})$
becomes connected. This choice yields the strongest exclusion on the function
distance $d_f$, since any larger value of $\alpha$ results in weaker
constraints on the feasible values of $d_f$. Moreover, as $\alpha \ge d$, where $d$ is the minimum distance
of $\mathcal{C}$, establishing the connectivity of the minimum-distance graph
$G_{d}(\mathcal{C})$ yields the strongest possible restriction on the function
distance $d_f$.

The following lemma gives a condition for the $\alpha$-distance graph to be connected.
\begin{lemma}\label{adG3}
Let $C$ be a code with covering radius $R(C)$. Then the $\alpha$-distance graph $G_{\alpha}(C)$ with $\alpha \geq 2R(C) + 1$ is connected.
\end{lemma}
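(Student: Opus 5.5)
We argue by contradiction, using the path structure of the Hamming space. Suppose $G_{\alpha}(C)$ is disconnected for some $\alpha \ge 2R(C)+1$. Then its vertex set splits into two nonempty sets $A$ and $B = C\setminus A$ with no edge between them, i.e.\ $d(a,b) > \alpha \ge 2R(C)+1$ for all $a\in A$, $b\in B$. The aim is to produce a vector of $\mathbb{F}_q^n$ whose distance to $C$ exceeds $R(C)$, which contradicts the definition of the covering radius.

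Pick $a\in A$ and $b\in B$ with $D = d(a,b) = d(A,B) \ge 2R(C)+2$. Walk from $a$ to $b$ by changing one differing coordinate at a time. This gives vectors $a = x_0, x_1, \dots, x_D = b$ with $d(x_i,a) = i$ and $d(x_i,b) = D-i$. For each $i$ take a nearest codeword $c_i$, so that $d(x_i,c_i)\le R(C)$ by the definition of the covering radius. We may take $c_0=a\in A$ and $c_D=b\in B$. Hence there is an index $i$ with $c_i\in A$ and $c_{i+1}\in B$. By the triangle inequality,
\[
d(c_i,c_{i+1}) \le d(c_i,x_i) + d(x_i,x_{i+1}) + d(x_{i+1},c_{i+1}) \le 2R(C)+1 \le \alpha .
\]
So $c_i$ and $c_{i+1}$ are adjacent in $G_{\alpha}(C)$ but lie on opposite sides of the partition, which is a contradiction. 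Therefore $G_{\alpha}(C)$ is connected.

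The only delicate point is the transition step. We must make sure the walk really hands off from $A$ to $B$, which holds because the endpoints are themselves codewords of $A$ and $B$ with nearest codeword themselves. We must also check that consecutive $x_i$ differ in exactly one coordinate, which accounts for the ``$+1$'' in $2R(C)+1$. The choice of $a,b$ at minimum distance is not actually needed. Any $a\in A$ and $b\in B$ work, since only the discrete intermediate-value argument along the path is used.
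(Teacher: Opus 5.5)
Your proof is correct, but it takes a different route from the paper's.

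\textbf{The paper's route.} The paper fixes a bipartition $C=A\cup B$ and takes a pair $a^*\in A$, $b^*\in B$ realizing $\delta=d(A,B)$. It then picks a single ``midpoint'' $x$ with $d(x,a^*)=\lceil\delta/2\rceil$ and $d(x,b^*)=\lfloor\delta/2\rfloor$. Using the minimality of $\delta$ and the triangle inequality, it shows that every codeword of $A$ is at distance at least $\lceil\delta/2\rceil$ from $x$, and every codeword of $B$ is at distance at least $\lfloor\delta/2\rfloor$. Hence $\lfloor\delta/2\rfloor=d(x,C)\le R(C)$, which gives $\delta\le 2R(C)+1$.

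\textbf{Your route.} You walk along a unit-step path from $a$ to $b$ and attach a nearest codeword to each vertex of the walk. A discrete intermediate-value step then gives consecutive nearest codewords $c_i\in A$ and $c_{i+1}\in B$. Their distance is at most $R(C)+1+R(C)$.

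\textbf{Comparison.} The paper's version uses only one auxiliary vector and bounds $d(A,B)$ in one step. However, it genuinely needs the minimizing pair, since minimality is exactly what rules out some other codeword lying closer to $x$. It also needs a point splitting $\delta$ exactly, i.e.\ geodesic structure. Your version needs no extremal choice, so your remark that $d(a,b)=d(A,B)$ is unnecessary is right. You never have to control which codeword is nearest to a given point, and you exhibit the crossing edge explicitly. Your argument also does not need the path to be geodesic, only unit steps, so it carries over to any connected graph metric. For the Hamming space as in the lemma, both arguments give the same conclusion.
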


\begin{proof}
Let $C = A \cup B$ be a partition of $C$ into two non-trivial parts such that $d(A,B) = \delta$.
Let $a^* \in A$ and $b^* \in B$ be such that $d(a^*, b^*) = \delta$.
Choose a vector $x \in \mathbb{F}_q^n$ satisfying
\[
d(x, a^*) = \left\lceil \frac{\delta}{2} \right\rceil, \qquad
d(x, b^*) = \left\lfloor \frac{\delta}{2} \right\rfloor.
\]
Then,
\[
d(x, A) \le d(x, a^*) = \left\lceil \frac{\delta}{2} \right\rceil, \quad
d(x, B) \le d(x, b^*) = \left\lfloor \frac{\delta}{2} \right\rfloor.
\]
By the triangle inequality, for any $a \in A$, we have
\[
\delta \le d(a, b^*) \le d(a, x) + d(x, b^*) = d(a, x) + \left\lfloor \frac{\delta}{2} \right\rfloor,
\]
which gives $d(a, x) \ge \left\lceil \frac{\delta}{2} \right\rceil$.
Hence,
$
d(x, A) = \left\lceil \frac{\delta}{2} \right\rceil.$ Similarly, we can get $d(x, B) = \left\lfloor \frac{\delta}{2} \right\rfloor.$
Further by the definition of covering radius, we have
\[
R(C) \ge \max_{x' \in \mathbb{F}_q^n} d(x', C) \ge d(x, C)
= \min \{ d(x, A), d(x, B) \}
= \left\lfloor \frac{\delta}{2} \right\rfloor.
\]
Thus, $\delta \le 2R(C) + 1.$ 
This means the minimum distance between any two disconnected components is at most $2R(C) + 1$. Therefore, if we construct a graph $G_{2R(C)+1}(C)$ by connecting codewords at distance less than or equal to $2R(C) + 1$, then $G_{2R(C)+1}(C)$ is connected. Hence any $G_{\alpha}(C)$ with $\alpha \geq 2R(C) + 1$ is connected. 
\end{proof}

Using Theorem \ref{aDG1} and Lemma \ref{adG3}, we have the following result.
\begin{corollary}\label{aDG4}
Let $\mathcal{C}$ be a $(n,q^k,d)$ code with covering radius $R(\mathcal{C})$. Then $\mathcal{C}$ cannot be an $(f\!:d,d_f)$-FCC for any $f:\mathbb{F}_q^k\to \text{Im}(f)$ with $|\text{Im}(f)| \geq 2$ and $d_f>2R(C)+1$.
\end{corollary}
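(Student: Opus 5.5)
The plan is to combine Lemma~\ref{adG3} with Theorem~\ref{aDG1}, choosing $\alpha = 2R(\mathcal{C})+1$.

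First I would check that this $\alpha$ is admissible in the definition of the $\alpha$-distance graph, which requires $\alpha \ge d_{\min}(\mathcal{C}) = d$. This is the standard fact $d \le 2R(\mathcal{C})+1$. Take two codewords $c_1,c_2$ at distance $d$ and let $x$ lie on a shortest path between them with $d(x,c_1)=\lceil d/2\rceil$ and $d(x,c_2)=\lfloor d/2\rfloor$. Any codeword $c$ within distance $\lfloor (d-1)/2\rfloor$ of $x$ would force, by the triangle inequality, either a codeword distance below $d$ or $c=c_1$ or $c=c_2$, and both of those are excluded by the chosen distances. Hence $R(\mathcal{C}) \ge d(x,\mathcal{C}) \ge \lfloor (d-1)/2\rfloor + 1 \ge (d-1)/2$, which gives $d \le 2R(\mathcal{C})+1$.

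The degenerate case $|\mathcal{C}|=1$ is not relevant: it would force $q^k=1$, so no function with $|\mathrm{Im}(f)|\ge 2$ exists. (If one prefers, $G_\alpha$ for any $\alpha$ can also be used here, since a graph on at least one vertex is then trivially connected.)

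With $\alpha = 2R(\mathcal{C})+1 \ge d$ fixed, Lemma~\ref{adG3} shows that $G_{\alpha}(\mathcal{C})$ is connected. Theorem~\ref{aDG1}, applied to this $\alpha$, then says that $\mathcal{C}$ cannot be an $(f\!:d,d_f)$-FCC for any $f$ with $|\mathrm{Im}(f)|\ge 2$ and $d_f > 2R(\mathcal{C})+1$, which is exactly the claim.

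The only step needing any care is the admissibility check $\alpha \ge d$. Everything else is direct substitution into the earlier results, since neither Lemma~\ref{adG3} nor Theorem~\ref{aDG1} actually uses $\alpha\ge d$ in its argument; the check is needed only to match the definition of $G_\alpha$.
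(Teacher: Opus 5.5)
Your main argument is the paper's own: take $\alpha = 2R(\mathcal{C})+1$, use Lemma~\ref{adG3} to get that $G_{\alpha}(\mathcal{C})$ is connected, and then apply Theorem~\ref{aDG1}. You are also right that neither result uses $\alpha \ge d$ in its argument.

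The one flaw is in your side check of $d \le 2R(\mathcal{C})+1$, which the paper does not carry out. When $d$ is odd, $d(x,c_2) = \lfloor d/2 \rfloor = \lfloor (d-1)/2 \rfloor$. So $c_2$ lies inside the ball of radius $\lfloor (d-1)/2 \rfloor$ around $x$ and is \emph{not} excluded. Your intermediate bound $d(x,\mathcal{C}) \ge \lfloor (d-1)/2 \rfloor + 1$ is therefore false for odd $d$. For the $[7,4,3]$ Hamming code it would give $R \ge 2$, whereas $R = 1$.

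The correct computation is $d(x,\mathcal{C}) = \lfloor d/2 \rfloor$. Every codeword $c \neq c_2$ satisfies $d(c,x) \ge d(c,c_2) - d(x,c_2) \ge d - \lfloor d/2 \rfloor = \lceil d/2 \rceil$, and $c_2$ attains $\lfloor d/2 \rfloor$. This gives $R(\mathcal{C}) \ge \lfloor d/2 \rfloor \ge (d-1)/2$, and hence $d \le 2R(\mathcal{C})+1$ as you wanted.

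Alternatively, the proof of Lemma~\ref{adG3} already gives this bound. Any bipartition $\mathcal{C} = A \cup B$ into nonempty parts satisfies $d \le d(A,B) = \delta \le 2R(\mathcal{C})+1$.
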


\subsection{Minimum-Distance Graphs of Perfect Codes and MDS Codes}

In this subsection, we show that the minimum-distance graph is connected for the two optimal families of ECCs: 1) perfect codes and 2) MDS codes.

\subsubsection{Perfect Codes}
The Hamming bound on the size of an error-correcting codes $\mathcal{C}$ with length $n$ and minimum distance $d$  is given in \cite{H1986} as follows.
$$M \leq \frac{q^n}{\sum_{i=0}^{\lfloor \frac{d-1}{2} \rfloor} {n \choose i}(q-1)^i}.$$
A code that achieves the Hamming bound is called a perfect code. If $d=2t+1$ then it is called a perfect $t$-error correcting code. Since the covering radius of a perfect $t$-error correcting code is exactly $t$, from Lemma \ref{adG3}, we have the following result.

\begin{theorem} \label{MDG_perfect_a}
The minimum-distance graph of a perfect $t$-error correcting code is connected.
\end{theorem}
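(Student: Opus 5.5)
The plan is to apply Lemma~\ref{adG3} directly with the value of the covering radius of a perfect code. The two ingredients are the minimum distance and the covering radius, and the point is that for a perfect $t$-error correcting code the threshold $2R(\mathcal{C})+1$ in Lemma~\ref{adG3} equals the minimum distance itself.

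First, I will fix the parameters. A perfect $t$-error correcting code $\mathcal{C}$ has $d_{\min}(\mathcal{C}) = d = 2t+1$ by definition. I will also record why its covering radius is exactly $t$, although the paper already states this.
\begin{itemize}
\item The Hamming balls of radius $t$ around codewords are pairwise disjoint, since $d = 2t+1$.
\item Perfectness (equality in the Hamming bound) says their total volume is $q^n$, so these balls tile $\mathbb{F}_q^n$.
\item Hence every $x \in \mathbb{F}_q^n$ is within distance $t$ of some codeword, giving $R(\mathcal{C}) \le t$.
\end{itemize}
Only this upper bound is needed for connectivity. Equality can be noted for completeness: a nonzero-distance vector at distance exactly $t$ from a codeword exists when $t \le n$.

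Second, I will invoke Lemma~\ref{adG3} with $\alpha = 2t+1$. This is admissible because
\[
\alpha = 2t+1 \ge 2R(\mathcal{C})+1,
\]
and $\alpha \ge d_{\min}(\mathcal{C})$ holds with equality. The lemma then gives that $G_{2t+1}(\mathcal{C})$ is connected. Since $2t+1 = d_{\min}(\mathcal{C})$, this graph is by definition the minimum-distance graph $G(\mathcal{C})$, which proves the theorem.

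There is no real obstacle. The only care needed is to confirm that the choice $\alpha = 2R+1$ coincides with $d_{\min}$. This rests on the two facts $d = 2t+1$ and $R \le t$. Degenerate cases, such as a code with a single codeword, are trivially connected and can be dismissed. The trivial repetition-type perfect codes fit the same argument without special treatment.
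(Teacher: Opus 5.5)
Your proof is correct and is essentially the paper's argument. The paper notes that a perfect $t$-error correcting code has covering radius $t$ and applies Lemma~\ref{adG3} with $\alpha = 2t+1 = d_{\min}(\mathcal{C})$; your tiling justification of $R(\mathcal{C}) \le t$ (the only direction needed) just spells out what the paper takes as known.
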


From Theorems \ref{aDG1} and \ref{MDG_perfect_a}, the following corollary holds.

\begin{corollary}\label{MDG_col1}
Let $f:\mathbb{F}_q^k \to \text{Im}(f)$ be a function. Then for an $(f\!:d_d,d_f)$-FCC with $d_f>d_d$, we have
$$
r_f(k:d_d,d_f) \ge n - k + 1,
$$
where $n$ is the integer satisfying
$
q^{n-k} = \sum_{i=0}^{\lfloor \frac{d_d-1}{2}\rfloor} {n \choose i} (q - 1)^i.
$
\end{corollary}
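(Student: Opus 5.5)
The plan is to combine the Hamming bound with the non-existence of strict FCCs built from perfect codes, which follows from Theorem~\ref{aDG1} and Theorem~\ref{MDG_perfect_a}. I assume throughout that $f$ is nontrivial, i.e.\ $|\mathrm{Im}(f)|\ge 2$, as elsewhere in this section. Let $t=\lfloor (d_d-1)/2\rfloor$ and write $V(N)=\sum_{i=0}^{t}\binom{N}{i}(q-1)^i$ for the volume of a Hamming ball of radius $t$ in $\mathbb{F}_q^N$.

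Fix an $(f\!:d_d,d_f)$-FCC $\mathfrak{C}_f:\mathbb{F}_q^k\to\mathbb{F}_q^{N}$ with $N=k+r$ and $d_f>d_d$. Its image $\mathcal{C}$ has $q^k$ codewords, since $d_d\ge 1$ makes the encoding injective, and it has minimum distance $d\ge d_d$. The balls of radius $t$ around the codewords are therefore disjoint, and the Hamming bound gives $q^{N-k}\ge V(N)$.

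The first step is to show $N\ge n$, i.e.\ $r\ge n-k$. For this I will use the monotonicity property $V(N)\le qV(N-1)$. It holds because a radius-$t$ ball in $\mathbb{F}_q^N$ is contained in the union, over the $q$ values of the last coordinate, of copies of the radius-$t$ ball in $\mathbb{F}_q^{N-1}$. Hence if $q^{N-k}<V(N)$, then
\[
q^{N-1-k}<V(N)/q\le V(N-1),
\]
so once the inequality $q^{N-k}\ge V(N)$ fails it keeps failing for all smaller lengths. Since $q^{n-k}=V(n)$ exactly, any length $N<n$ would have to satisfy $q^{N-k}<V(N)$. One checks this by noting that strict failure at $n-1$ follows from $V(n)\le qV(n-1)$ together with $q^{n-1-k}=V(n)/q$; the degenerate case of equality is handled below. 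Together with the Hamming bound, this forces $N\ge n$.

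The main obstacle is excluding equality $N=n$. In that case $q^{n-k}=V(n)$, so the disjoint radius-$t$ balls around the $q^k$ codewords fill $\mathbb{F}_q^n$. Thus $\mathcal{C}$ is a perfect $t$-error-correcting code with covering radius exactly $t$. The same argument also covers the degenerate case $N=n-1$ with $V(n)=qV(n-1)$, since then $q^{N-k}=V(N)$ and $\mathcal{C}$ is again perfect. A covering radius of $t$ forces $d\le 2t+1$: if $d\ge 2t+2$, a vector at distance $t+1$ from one codeword on a shortest path toward another would lie at distance more than $t$ from every codeword. Combining $d\ge d_d$ with $d\le 2t+1\le d_d$ pins down $d=d_d=2t+1$; for even $d_d$ this chain is already contradictory, so equality cannot occur at all. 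By Theorem~\ref{MDG_perfect_a} the minimum-distance graph $G_{d}(\mathcal{C})$ is connected. Applying Theorem~\ref{aDG1} with $\alpha=d=d_d<d_f$ and $|\mathrm{Im}(f)|\ge2$ contradicts $\mathcal{C}$ being an $(f\!:d_d,d_f)$-FCC. Hence $N\ge n+1$, i.e.\ $r\ge n-k+1$, and since this holds for every such FCC it bounds $r_f(k:d_d,d_f)$.
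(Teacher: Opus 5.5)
Your proposal is correct and follows the paper's route. The Hamming bound forces length at least $n$; at length exactly $n$ the code is perfect, so Theorem~\ref{MDG_perfect_a} together with Theorem~\ref{aDG1} excludes $d_f>d_d$. You also supply details the paper leaves implicit: the monotonicity $V(N)\le qV(N-1)$, the fact that such a code must have $d=d_d$ (which makes $\alpha=d_d$ legitimate), and the need for $|\mathrm{Im}(f)|\ge 2$. Your treatment of the ``degenerate case'' $V(n)=qV(n-1)$ is loose, but harmless: that case occurs only when $t\ge n$, which forces $k=0$ and is therefore vacuous under $|\mathrm{Im}(f)|\ge 2$.
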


\subsubsection{MDS Codes}

The Singleton bound on the size of an error-correcting code $\mathcal{C}$ of length $n$ and minimum distance $d$ is given in \cite{H1986} as
\[
M \le q^{\,n-d+1},
\]
equivalently, for $k=\log_q M$,
$
d \le n-k+1.
$
A code that meets the Singleton bound with equality is called a \emph{maximum distance separable (MDS)} code. In particular, an $(n,M,d)_q$ code is MDS if and only if $M=q^{\,n-d+1}$.

To prove that the minimum-distance graph of an MDS code is connected we use the following lemma.

\begin{lemma}\label{lem:MDG_MDS}
Let $\mathcal{C}$ be an MDS code with parameters $(n,M,d)_q$, and let $u,v\in\mathcal{C}$. Then there exists $u'\in\mathcal{C}$ such that
\begin{enumerate}
    \item $d(u,u')=d$ (i.e., $u'$ is a neighbor of $u$ in $G(\mathcal{C})$);
    \item $d(u',v)\le d(u,v)-1$. 
\end{enumerate}
If $d(u,v)=d$, the codeword $v$ itself serves as $u'$.
\end{lemma}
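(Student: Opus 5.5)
The plan is to use the standard \emph{information set} property of MDS codes: every set of $k=n-d+1$ coordinates determines a codeword uniquely, and every pattern of values on those coordinates is realized. With this in hand, $u'$ is built by prescribing its values on a suitable set of $k$ coordinates.

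First I will establish the information set property. Here $M=q^{n-d+1}=q^k$. For a set $S$ of $k$ coordinates, consider the projection $\pi_S:\mathcal{C}\to\mathbb{F}_q^{k}$. If two distinct codewords agreed on all of $S$, they would differ in at most $n-k=d-1$ positions, contradicting the minimum distance. Hence $\pi_S$ is injective. Since $|\mathcal{C}|=q^k=|\mathbb{F}_q^k|$, it is in fact bijective. So for any prescribed values on $S$ there is exactly one codeword taking those values.

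Next comes the construction. Let $D$ be the set of positions where $u$ and $v$ differ, with $\delta=|D|=d(u,v)\ge d$. Fix a position $i\in D$. Let $T$ consist of all $n-\delta$ positions outside $D$ together with any $\delta-d$ positions of $D\setminus\{i\}$; this is possible because $|D\setminus\{i\}|=\delta-1\ge\delta-d$. Then $|T|=n-d$, and we set $S=T\cup\{i\}$, so $|S|=n-d+1=k$. By the bijectivity above, there is a unique $u'\in\mathcal{C}$ with $u'_j=u_j$ for all $j\in T$ and $u'_i=v_i$.

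Finally I verify both properties.
\begin{enumerate}
\item Since $u'_i=v_i\neq u_i$, we have $u'\neq u$, so $d(u,u')\ge d$. On the other hand, $u'$ agrees with $u$ on $T$, so $d(u,u')\le n-|T|=d$. Hence $d(u,u')=d$.
\item Outside $D$ we have $u=v$, and $u'=u$ there because these positions lie in $T$; also $u'_i=v_i$. Therefore $u'$ and $v$ can differ only on $D\setminus\{i\}$, which gives $d(u',v)\le\delta-1=d(u,v)-1$.
\end{enumerate}

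In the case $\delta=d$, the set $T$ is exactly the complement of $D$, and $v$ itself satisfies all the prescribed values. By uniqueness $u'=v$, which matches the final remark of the statement.

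The only substantive step is the information set argument; the rest is bookkeeping to make sure $T$ has exactly the right size.
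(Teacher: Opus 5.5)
Your proof is correct and is essentially the paper's argument: your $k$-set $S=T\cup\{i\}$ is the paper's $J\supseteq S^c$ with a chosen $j\in J\cap S$, and $u'$ is obtained from the bijectivity of the projection onto any $k$ coordinates, with the same verification of both properties. The only differences are that you prove the information-set property where the paper cites it, and you use uniqueness to show the construction returns $v$ when $d(u,v)=d$; like the paper, you tacitly assume $u\neq v$.
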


\begin{proof}
Let $S=\{i\in[n]:u_i\ne v_i\}$ and $S^c=[n]\setminus S=\{i\in[n]:u_i=v_i\}$. Then $|S|=d(u,v)\ge d=n-k+1$, so $|S^c|=n-|S|\le k-1$. 
Choose a set $J\subseteq[n]$ with $|J|=k$ such that $S^c\subseteq J$ (i.e., $J$ contains all agreements of $u$ and $v$, and adds $k-|S^c|$ positions from $S$). Then
$
|J^c|=n-k=d-1,$ and 
$$|J\cap S|=k-|S^c|=k-(n-d(u,v))=d(u,v)-(d-1).
$$
Pick any $j\in J\cap S$ and define a vector $t\in\mathbb F_q^{J}$ by
$
t_j=v_j\ne u_j$ and $t_i=u_i\ \text{ for all } i\in J\setminus\{j\}.$
By the MDS projection property (for any $|J|=k$, the projection $\pi_J:\mathcal C\to \mathbb F_q^{J}$ is bijective), there exists a unique $u'\in\mathcal C$ with $\pi_J(u')=t$. Now, we have the following.
\begin{enumerate}
    \item \emph{$d(u,u')=d$.}  
The coordinates where $u$ and $u'$ may differ are: the single index $j\in J$ (forced to differ), plus indices in $J^c$ (unconstrained). Hence
$$
d(u,u')\le 1+|J^c| = 1+(d-1)=d.
$$
Since $u'\ne u$ and $d$ is the minimum distance of $\mathcal C$, we must have $d(u,u')=d$.

\item \emph{$d(u',v)\le d(u,v)-1$.}  
We count sure agreements between $u'$ and $v$. On $S^c\subseteq J$ we set $t_i=u_i=v_i$, so $u'_i=v_i$ for all $i\in S^c$; this gives $|S^c|=n-d(u,v)$ agreements. Additionally, at $j$ we enforced $u'_j=v_j$, yielding one more agreement. Therefore, the number of agreements is at least $n-d(u,v)+1$, and hence
\[
d(u',v)\le n-(n-d(u,v)+1) = d(u,v)-1.
\]
\end{enumerate}
This proves both claims.
\end{proof}

\begin{theorem}\label{MDG_MDS}
    The minimum-distance graph of any MDS code is connected.
\end{theorem}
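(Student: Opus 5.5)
The plan is to show directly that any two codewords $u,v\in\mathcal{C}$ are joined by a path in $G(\mathcal{C})$, using Lemma~\ref{lem:MDG_MDS} repeatedly. We argue by induction on $d(u,v)$, which ranges over the integers from $d$ to $n$.

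\emph{Base case.} If $d(u,v)=d$, then $u$ and $v$ are adjacent in $G(\mathcal{C})$ by definition of the minimum-distance graph. This is also the final clause of Lemma~\ref{lem:MDG_MDS}.

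\emph{Inductive step.} Suppose $d(u,v)>d$. Lemma~\ref{lem:MDG_MDS} gives $u'\in\mathcal{C}$ with $d(u,u')=d$, so $u'$ is a neighbor of $u$ in $G(\mathcal{C})$, and with $d(u',v)\le d(u,v)-1$. There are two cases.
\begin{itemize}
\item If $u'=v$, then $u$ and $v$ are already adjacent.
\item Otherwise $u'\neq v$, so $d(u',v)\ge d$. The inductive hypothesis then gives a path from $u'$ to $v$ in $G(\mathcal{C})$. Prepending the edge $\{u,u'\}$ gives a path from $u$ to $v$.
\end{itemize}
Equivalently, we can iterate: set $u_0=u$ and $u_{i+1}=u_i'$. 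The distances $d(u_i,v)$ strictly decrease and stay at least $d$ until $u_i=v$ or $d(u_i,v)=d$. So after at most $d(u,v)-d+1$ steps the walk $u_0,u_1,\dots$ reaches $v$ along edges of $G(\mathcal{C})$.

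Since $u,v$ were arbitrary, $G(\mathcal{C})$ is connected. The trivial cases, where $\mathcal{C}$ has a single codeword or $k=0$, need no argument.

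The only real content is Lemma~\ref{lem:MDG_MDS}, which is already established. The main point to watch is that it is applied with $k=n-d+1$ (so $q^k=M$). For nonlinear MDS codes, $k$ need not be an integer a priori, but $M=q^{n-d+1}$ forces $k=n-d+1$ to be an integer. The projection property onto any $k$ coordinates holds for all MDS codes, linear or not, since distinct codewords agree in at most $n-d=k-1$ positions and $|\mathcal{C}|=q^k$.

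Combining this theorem with Theorem~\ref{aDG1} will then show that MDS codes cannot be strict $(f\!:d,d_f)$-FCCs for any $f$ with $|\mathrm{Im}(f)|\ge 2$.
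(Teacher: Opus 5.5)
Your proof is correct and follows the paper's route: apply Lemma~\ref{lem:MDG_MDS} repeatedly, moving along a minimum-distance edge while strictly decreasing the distance to $v$ until $v$ is reached. Your strong induction on $d(u,v)$ formalizes the paper's iterative path construction, and your check that the projection property and integrality of $k=n-d+1$ also hold for nonlinear MDS codes fills in something the paper leaves implicit.
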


\begin{proof}
Let $G(\mathcal C)$ be the minimum-distance graph of $\mathcal C$, and $u,v\in\mathcal C$. We will construct a path in $G(\mathcal{C})$ from vertex $u$ to $v$ by moving along edges while reducing the distance from $v$. Start with $x_0=u$, and apply Lemma~\ref{lem:MDG_MDS} to the pair $(x_0,v)$ to obtain $x_1\in\mathcal C$ with
$$
d(x_0,x_1)=d \quad\text{and}\quad d(x_1,v)\le d(x_0,v)-1.
$$
If $d(x_1,v)=0$, then we are already at $v$, otherwise $d(x_1,v)\geq d$. Again use Lemma \ref{lem:MDG_MDS} for the pair $(x_1,v)$, and obtain $x_s$. By repeating this process, we get a sequence $x_0,x_1,x_2,\dots$ in $\mathcal C$ such that
\[
d(x_i,x_{i+1})=d \qquad\text{and}\qquad d(x_{i+1},v)\le d(x_i,v)-1.
\]
Because $d(x_i,v)$ is a nonnegative integer that decreases at each step, after finitely many steps we reach some $x_j$ with $d(x_j,v)=d$. 
The next application of Lemma~\ref{lem:MDG_MDS} gives us $x_{j+1}=v$ with $d(x_j,v)=d$. 
Thus $u=x_0,x_1,\dots,x_j,v$ is a path in $G(\mathcal C)$ from $u$ to $v$. Therefore, $G(\mathcal C)$ is connected.
\end{proof}
From Theorems \ref{aDG1} and \ref{MDG_MDS}, the following corollary holds.

\begin{corollary}\label{col:MDG_MDS}
Let $f:\mathbb{F}_q^k \to \mathrm{Im}(f)$ be a function. 
Assume there exists an MDS $(n,q^k,d)_q$ code, i.e., $n=k+d-1$.
Then for an $(f\!:d,d_f)$-FCC with $d_f>d$, we have
\[
r_f(k:d,d_f) \ge n-k+1 = d.
\]
\end{corollary}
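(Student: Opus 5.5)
The plan is to show that any $(f\!:d,d_f)$-FCC with $d_f>d$ must have redundancy at least $d$, by ruling out redundancy $r=d-1$; redundancy below $d-1$ is already impossible by the Singleton bound. So suppose an encoding $\mathfrak{C}_f:\mathbb{F}_q^k\to\mathbb{F}_q^{k+r}$ is an $(f\!:d,d_f)$-FCC with $d_f>d$. We take $|\mathrm{Im}(f)|\ge 2$, since otherwise the function constraint is vacuous and the statement is understood for nontrivial $f$.

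First, the image $\mathcal{C}$ of $\mathfrak{C}_f$ has $q^k$ codewords, since the encoding is injective because $d\ge 1$. Its minimum distance is at least $d$. By the Singleton bound $q^k\le q^{(k+r)-d_{\min}(\mathcal{C})+1}$, so $d_{\min}(\mathcal{C})\le r+1$.

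Now assume, for contradiction, that $r\le n-k=d-1$. Combined with $d_{\min}(\mathcal{C})\ge d$, this forces $r=d-1$ and $d_{\min}(\mathcal{C})=d$. Hence $\mathcal{C}$ meets the Singleton bound with equality and is an MDS $(k+d-1,q^k,d)_q$ code, i.e.\ of length exactly $n$.

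Next, Theorem~\ref{MDG_MDS} applies and shows that the minimum-distance graph $G(\mathcal{C})=G_d(\mathcal{C})$ is connected. Theorem~\ref{aDG1} with $\alpha=d$ then says $\mathcal{C}$ cannot be an $(f\!:d,d_f)$-FCC with $d_f>d$ and $|\mathrm{Im}(f)|\ge2$, which is a contradiction. Therefore $r\ge n-k+1=d$.

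The only delicate point is the mismatch between Theorem~\ref{aDG1} and the setting here. Theorem~\ref{aDG1} is stated for a code with minimum distance exactly $d$, whereas the FCC definition only guarantees distance $\ge d$. I handle this through the equality $d_{\min}(\mathcal{C})=d$ derived above. I also note that the argument never needs the assumed existence of an MDS code except to make $n$ meaningful: any hypothetical FCC with redundancy $d-1$ would itself be the MDS code, and the contradiction follows.
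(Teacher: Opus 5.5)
Your proof is correct and follows the route the paper intends; the paper itself only says the corollary follows from Theorems~\ref{aDG1} and~\ref{MDG_MDS}. You supply the missing step: the Singleton bound forces any FCC with redundancy $d-1$ to be an MDS code of length $n$, whose connected minimum-distance graph then rules out $d_f>d$. Your added hypothesis $|\mathrm{Im}(f)|\ge 2$ is genuinely needed, because for constant $f$ the MDS code itself is an FCC with redundancy $d-1$.
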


\subsection{Minimum-distance graph of linear codes}
In this subsection, we give a class of linear codes whose minimum-distance graph is connected.

\begin{lemma}\label{lem:LC}
Let $\mathcal{C}$ be a linear code over $\mathbb{F}_q$ with minimum distance $d$, and let
$W_d$ denote the set of all codewords of $\mathcal{C}$ having Hamming weight $d$.
Then $\mathcal{C}$ is spanned by its minimum-weight codewords, i.e.,
$\mathcal{C} = \langle W_d \rangle$, if and only if the minimum-distance graph
$G(\mathcal{C})$ is connected.
\end{lemma}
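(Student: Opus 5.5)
The plan is to use linearity to turn adjacency in $G(\mathcal{C})$ into a statement about $W_d$. For codewords $c_1,c_2\in\mathcal{C}$ we have $d(c_1,c_2)=\mathrm{wt}(c_1-c_2)$, and $c_1-c_2\in\mathcal{C}$. Hence $c_1$ and $c_2$ are adjacent in $G(\mathcal{C})$ if and only if $c_1-c_2\in W_d$. So $G(\mathcal{C})$ is exactly the Cayley graph of the additive group $(\mathcal{C},+)$ with connection set $W_d$. Note that $W_d=-W_d$, and more generally $\lambda W_d=W_d$ for every $\lambda\in\mathbb{F}_q^*$, since scaling preserves Hamming weight. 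I will prove the two directions separately.

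($\Leftarrow$) Suppose $G(\mathcal{C})$ is connected, and let $c\in\mathcal{C}$. There is a path $0=x_0,x_1,\dots,x_m=c$ in $G(\mathcal{C})$. Each step $w_i=x_i-x_{i-1}$ lies in $W_d$. Telescoping gives $c=\sum_{i=1}^m w_i\in\langle W_d\rangle$. Hence $\mathcal{C}\subseteq\langle W_d\rangle$. The reverse inclusion holds trivially because $\mathcal{C}$ is linear.

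($\Rightarrow$) Suppose $\mathcal{C}=\langle W_d\rangle$. Then every codeword $c$ can be written as $c=\sum_{i=1}^m\lambda_i w_i$ with $\lambda_i\in\mathbb{F}_q^*$ and $w_i\in W_d$. Each $\lambda_i w_i$ again lies in $W_d$, so $c$ is a sum of elements of $W_d$, say $c=w'_1+\cdots+w'_m$. Set $x_0=0$ and $x_j=\sum_{i\le j}w'_i$. These partial sums are codewords, and consecutive ones satisfy $d(x_{j-1},x_j)=\mathrm{wt}(w'_j)=d$, so they form a walk in $G(\mathcal{C})$ from $0$ to $c$. This gives a walk from $0$ to every vertex. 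For arbitrary $u,v$, I would concatenate the walk from $u$ to $0$ (the reverse of the walk from $0$ to $u$) with the walk from $0$ to $v$. Alternatively, translating a walk by $u$ preserves adjacency, since translation is a graph automorphism. Either way $G(\mathcal{C})$ is connected.

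The only step needing care is absorbing the scalar coefficients into $W_d$, so that a linear span reduces to plain sums. This is immediate from weight invariance under nonzero scaling. A minor additional point is that a walk may repeat vertices, but any walk contains a path, so connectivity follows.
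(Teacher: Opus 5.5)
Your proof is correct and follows the paper's argument: the spanned-implies-connected direction is the same (partial sums of minimum-weight codewords give a path from $0$, then translate), and your telescoping argument for connected-implies-spanned is the direct form of the paper's contrapositive, which shows that no edge joins two different cosets of $\langle W_d\rangle$. You also state explicitly that nonzero scalars can be absorbed into $W_d$, which the paper uses without comment when it writes every codeword as a plain sum $w_1+\cdots+w_t$.
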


\begin{proof}
Since $\mathcal{C}$ is linear, we have
$d(u,v)=\mathrm{wt}(u-v),$ for all $u,v \in \mathcal{C}.$
Assume that $\mathcal{C} = \langle W_d \rangle$.
Let $u \in \mathcal{C}$, then there exist codewords
$w_1,w_2,\ldots,w_t \in W_d$ such that
\[
u = w_1 + w_2 + \cdots + w_t.
\]
Starting from the zero codeword and successively adding $w_1,w_2,\ldots,w_t$,
each step corresponds to adding a minimum-weight codeword and hence moves along an
edge of $G(\mathcal{C})$. After $t$ steps, we reach $u$.
Thus, every codeword is connected to the zero codeword by a path in $G(\mathcal{C})$.
Since $\mathcal{C}$ is linear, it follows that any two codewords are connected, and
therefore $G(\mathcal{C})$ is connected.

Conversely, suppose that $\mathcal{C} \neq \langle W_d \rangle$.
Let $\mathcal{C}' = \langle W_d \rangle$, which is a proper subcode of $\mathcal{C}$.
Now we show that $G(\mathcal{C})$ is disconnected.
If $x,y \in \mathcal{C}$ are adjacent in $G(\mathcal{C})$, then
\[
x-y \in W_d \subseteq \mathcal{C}',
\]
which implies that $x$ and $y$ belong to the same coset of $\mathcal{C}'$.
Consequently, there is no edge between vertices belonging to distinct cosets of
$\mathcal{C}'$, and $G(\mathcal{C})$ decomposes into components indexed by the cosets of $\mathcal{C}'$.
Hence, $G(\mathcal{C})$ is disconnected.
\end{proof}

From Theorem \ref{aDG1} and Lemma \ref{lem:LC}, the following corollary holds.

\begin{corollary}\label{col:LC}
Let $\mathcal{C}$ be an $[n,k,d]$ linear code over $\mathbb{F}_q$ that is spanned by its
minimum-weight codewords. Then $\mathcal{C}$ cannot be an $(f\!:d,d_f)$-FCC for any
$f:\mathbb{F}_q^k \to \mathrm{Im}(f)$ with $|\mathrm{Im}(f)| \ge 2$ and $d_f > d$.
\end{corollary}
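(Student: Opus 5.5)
The corollary follows by combining Lemma~\ref{lem:LC} with Theorem~\ref{aDG1} at $\alpha=d$. I would first set $\alpha=d=d_{\min}(\mathcal{C})$, so that $G_\alpha(\mathcal{C})$ is exactly the minimum-distance graph $G(\mathcal{C})$. By hypothesis $\mathcal{C}=\langle W_d\rangle$. The forward direction of Lemma~\ref{lem:LC} then gives that $G(\mathcal{C})$ is connected. This is the only place linearity is used: it supplies the identity $d(u,v)=\mathrm{wt}(u-v)$ and the translation argument from the zero codeword.

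Next I would check that the hypotheses of Theorem~\ref{aDG1} hold. An $[n,k,d]$ linear code over $\mathbb{F}_q$ has $q^k$ codewords, so it is an $(n,q^k,d)$ code. Any encoding $\mathbb{F}_q^k\to\mathcal{C}$ is a bijection, so each message $u$ has a well-defined codeword $c_u$. The minimum-distance graph $G_d(\mathcal{C})$ is connected, and by assumption $d_f>d=\alpha$ and $|\mathrm{Im}(f)|\ge 2$.

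Theorem~\ref{aDG1} then says $\mathcal{C}$ cannot be an $(f\!:d,d_f)$-FCC. Concretely, the classes $\{c_u: f(u)=a\}$ for $a\in\mathrm{Im}(f)$ would split $G(\mathcal{C})$ into at least two nonempty parts with no edges between them, which contradicts connectivity.

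I expect no real obstacle. The only subtlety is that the FCC property depends on the encoding map, while connectivity is a property of the codeword set alone. The argument works for every bijective encoding, so the conclusion holds however messages are assigned to codewords.
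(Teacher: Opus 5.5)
Your proposal is correct and is the paper's argument: the forward direction of Lemma~\ref{lem:LC} gives connectivity of $G(\mathcal{C})=G_d(\mathcal{C})$, and Theorem~\ref{aDG1} with $\alpha=d$ then rules out any $d_f>d$ when $|\mathrm{Im}(f)|\ge 2$. Your remark that any valid encoding onto $\mathcal{C}$ must be a bijection is a correct point that the paper leaves implicit.
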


\subsection{$\alpha$-distance graph of some other classes of codes}

There exist several classes of codes for which the covering radius is known.
Moreover, even an upper bound on the covering radius is sufficient for our
purposes. In particular, by Corollary~\ref{aDG4}, if the covering radius of a
code $\mathcal{C}$ satisfies $R(\mathcal{C}) \le x$, then $\mathcal{C}$ cannot be
an $(f\!:d,d_f)$-FCC for any function
$f:\mathbb{F}_q^k \to \mathrm{Im}(f)$ with $|\mathrm{Im}(f)| \ge 2$ and
$d_f > 2x+1$.

\begin{itemize}
\item \textbf{Quasi-perfect codes:}  
Let $\mathcal{C}_0 \subseteq \mathbb{F}_q^{\,n}$ be a perfect
$(n,q^k,2t+1)$ code. A code $\mathcal{C}\subseteq \mathbb{F}_q^{\,n+1}$
with $|\mathcal{C}|=|\mathcal{C}_0|=q^k$ is called a \emph{quasi-perfect code}
if puncturing $\mathcal{C}$ in one coordinate yields $\mathcal{C}_0$, i.e.,
there exists a coordinate $i$ such that
\[
\pi_i(\mathcal{C}) = \mathcal{C}_0,
\]
where $\pi_i$ denotes the puncturing map that deletes the $i$-th coordinate.
Classical examples include extended Hamming codes and extended Golay codes.
The covering radius of quasi-perfect codes is $t+1$, and well established
in the literature \cite{L1992}.

\item \textbf{First-order Reed-Muller codes:}  
For the binary first-order Reed-Muller code $\mathrm{RM}(1,m)$ of length $2^m$,
the covering radius is known exactly and is given by
\[
R(\mathrm{RM}(1,m)) = 2^{m-1} - 2^{\lceil m/2\rceil - 1}.
\]
This result was established in \cite{L1992}.
\end{itemize}

A general upper bound on the covering radius of Reed-Muller codes $\mathrm{RM}(r,m)$ is given in \cite{CL1994}.  In \cite{JM1999}, the authors showed that the covering radius of an $[n,k,d]$ linear code over $\mathbb{F}_q$ with dual distance
$d^{\perp}$ satisfies
\begin{equation*}
\label{eq:JanwaMattsonBound}
R(\mathcal{C}) \le 
n - \sum_{i=0}^{d^{\perp}-2}
\left\lceil \frac{n-i}{q} \right\rceil .
\end{equation*}

\section{Conclusion}
\label{conclusion}

This work established non-existence results for \emph{strict} $(f\!:\!d_d,d_f)$-function-correcting codes with data protection, where the function values are required to be protected more strongly than the data. By analyzing distance graphs associated with codes, we identified connectivity conditions that rule out the possibility of such strict FCCs.

The proposed graph-based approach allowed us to relate the feasibility of strict FCCs to classical parameters of error-correcting codes. In particular, we showed that optimal families such as perfect codes and MDS codes cannot admit strictly larger protection for function values than for data. These results highlight the inherent structural limitations of FCCs with data protection.

\section*{Acknowledgement}
This work was supported by a joint project grant to Aalto University and Chalmers University of Technology (PIs A. Graell i Amat and C. Hollanti) from the Wallenberg AI, Autonomous Systems and Software Program, and additionally by the Science and Engineering Research Board (SERB) of the Department of Science and Technology (DST), Government of India, through the J.C. Bose National Fellowship to Prof. B. Sundar Rajan.

\end{document}